\newcommand{\cd}{{\cal D}}
\newcommand{\bsw}{\boldsymbol{w}}
\newcommand{\bsx}{\boldsymbol{x}}
\newcommand{\bsy}{\boldsymbol{y}}
\newcommand{\bsz}{\boldsymbol{z}}
\newcommand{\glu}{{\!\::\:\!}}
\newcommand{\rd}{\mathrm{\, d}}
\newcommand{\var}{{\mathrm{Var}}}
\newcommand{\wh}{\widehat}
\renewcommand{\emptyset}{\varnothing}
\renewcommand{\ge}{\geqslant}
\renewcommand{\le}{\leqslant}
\newcommand{\dustd}{\mathbf{U}} 
\newcommand{\e}{{\mathbb{E}}} 
\newtheorem{theorem}{Theorem}
\newtheorem{proposition}{Proposition}
\theoremstyle{definition}
\renewcommand{\glu}{{:}}
\newcommand{\ult}{\underline\tau}
\newcommand{\olt}{\overline\tau}
\newcommand{\simiid}{\stackrel{\mathrm{iid}}{\sim}}
\author{
Art B. Owen\\
Stanford University}
\date{April 2012}
\title{Better estimation of small Sobol' sensitivity indices}
\begin{document}

\maketitle
\begin{abstract}
A new method for estimating Sobol' indices is proposed.
The new method makes use of $3$ independent input
vectors rather than the usual $2$.
It attains much greater accuracy on problems where
the target Sobol' index is small, even outperforming
some oracles which adjust using the true but unknown
mean of the function. When the target Sobol' index
is quite large, the oracles do better than the new method.
\end{abstract}

\section{Introduction}

Let $f$ be a deterministic function on $[0,1]^d$
for $d\ge 1$. 
Sobol' sensitivity indices, derived from
a functional ANOVA,
are used to measure
the importance of subsets of input variables.
There are two main types of index, but one
of them is especially hard to estimate
in cases where that index is small.

The problematic index can be represented
as a covariance between outcomes of $f$ evaluated at
two random input points, that share some but
not all of their components. A natural
estimator then is a sample covariance
based on pairs of random $d$-vectors
of this type.
\cite{sobo:mysh:2007} report a numerical
experiment where enormous efficiency differences
obtain depending on how one estimates that covariance.
The best gains
arise from applying some centering strategies
to those pairs of function evaluations.

This article introduces a new estimator
for the Sobol' index, based on three
input vectors, not two.  The new estimator
makes perhaps surprising use of randomly
generated centers. The random centering adds
to the cost of every simulation run and
might be thought to add noise. But in
many examples that noise must be strongly negatively
correlated with the quantity it adjusts
because (in those examples) the random centering greatly increases
efficiency. The new estimate is not always
most efficient.  In particular when the
index to be estimated is large the new
estimate is seen to perform worse than some oracles
that one could approximate numerically.

The motivation behind
Sobol' indices, is well explained in the 
text by~\cite{salt:ratt:andr:camp:cari:gate:sais:tara:2008}.
These indices have been applied to problems in
industry, science and public health.
For a recent mathematical account of Sobol' indices, 
see~\cite{sobomat:tr}. 

The outline of this article is as follows.
Section~\ref{sec:back} introduces
Sobol' indices and our notation.
Section~\ref{sec:esti} presents
the original estimator of the Sobol'
indices and the four improved estimators we consider here.
Section~\ref{sec:exam} considers some numerical
examples. For small Sobol' indices, the newly
proposed estimator is best, beating two oracles.
For very large indices, the best performance comes
from an oracle that uses the true function mean
twice.
Section~\ref{sec:gene} presents some theoretical
support for the new estimator.  It generalizes
the estimator to a wider class of methods
and shows that the proposed estimator minimizes
a proxy for the variance, when one considers functions
of product form.

\section{Background}\label{sec:back}

For $d\ge1$, let $f\in L^2[0,1]^d$.
Then $f$ can be written in an ANOVA decomposition
as a sum of a constant $\mu = \int_{[0,1]^d} f(\bsx)\rd\bsx$
and $2^d-1$ mutually orthogonal ANOVA effects, one for each nonempty
subset of $\cd=\{1,2,\dots,d\}$.
The effect for non-empty subset $u\subseteq\cd$
has variance $\sigma^2_u$, while $\sigma^2_\emptyset=0$.
A larger $\sigma^2_u$ means a more important interaction
among those variables, but Sobol' indices
account for the fact that the importance of a set
of variables also depends on other interactions in
which they participate.

The two most important Sobol' indices are
\begin{align}
\ult^2_u = \sum_{v\subseteq u}\sigma^2_v,\quad\text{and}\\
\olt^2_u = \sum_{v\cap u\ne\emptyset}\sigma^2_v.
\end{align}
These satisfy $0\le\ult^2_u \le\olt^2_u\le\sigma^2$
and $\ult^2_u =\sigma^2-\olt^2_{-u}$, where $\sigma^2$
is the variance $\int (f(\bsx)-\mu)^2\rd\bsx$.
We use $-u$ or $u^c$ depending on typographical
readability, to denote the complement of $u$ in $\cd$.
These indices provide two measures of the importance
of the variables in subset $u$. The larger measure
includes interactions between variables in
$u$ and variables in its complement, while the smaller
measure excludes those interactions.

If we unite the part of $\bsx\in[0,1]^d$ corresponding
to indices in the set $u$ with the part of another
point $\bsy\in[0,1]^d$ for indices in $-u$, then the resulting point is
denoted $\bsx_u\glu\bsy_{-u}$.
Most estimation strategies for Sobol' indices are based on
the identities
\begin{align}
\ult^2_u&=\mu^2+\int f(\bsx)f(\bsx_u\glu\bsy_{-u})\rd\bsx\rd\bsy,\quad\text{and}\label{eq:basiclo}\\
\olt^2_u&=\frac12\int \bigl( f(\bsx)-f(\bsx_u\glu\bsy_{-u}))^2\rd\bsx\rd\bsy,\label{eq:basichi}
\end{align}
with integrals taken over $\bsx$ and $\bsy\in[0,1]^d$.

When $\olt^2_u$ is small, then~\eqref{eq:basichi} leads to a very
effective Monte Carlo strategy based on
$$
\wh\olt^2_u=\frac1{2n}\sum_{i=1}^n \bigl( f(\bsx_i)-f(\bsx_{i,u}\glu\bsy_{i,-u}))^2
$$
for $(\bsx_i,\bsz_i)\simiid\dustd[0,1]^d$.
This estimator is a sum of squares, hence nonnegative, and it is unbiased.
If the true $\ult^2_u=0$, then $\wh\ult^2_u=0$ with probability one.
More generally, if the true $\ult^2_u$ is small, then the
estimator averages squares of typically small quantities.
We assume throughout that $\int f(\bsx)^4\rd\bsx<\infty$
so that the variance of this and our other estimators is finite.

The natural way to estimate $\ult^2_u$ is via
\begin{align}\label{eq:natu}
\wh\ult^2_u=\frac1{n}\sum_{i=1}^n f(\bsx_i)f(\bsx_{i,u}\glu\bsy_{i,-u})
-\hat\mu^2.
\end{align}
The simplest estimator of $\hat \mu$ is
$(1/n)\sum_{i=1}^nf(\bsx_i)$ but
\cite{jano:klei:lagn:node:prie:2012:tr}
have recently proved that it is better to use
$\hat\mu = (1/2n)\sum_{i=1}^n
(f(\bsx_i)+f(\bsx_{i,u}\glu\bsy_{i,-u}))$.

\section{The estimators}\label{sec:esti}

The problem with~\eqref{eq:natu} is that
it has very large variance
when $\ult^2_u \ll \mu^2$. 
Although $\ult^2_u$ is invariant with
respect to shifts replacing $f(\bsx)$
by $f(\bsx)-c$ for a constant $c$,
the variance of~\eqref{eq:natu} can
be strongly affected by such shifts.
\cite{sobo:1990,sobo:1993}
recommends shifting $f$ by an amount
close to $\mu$, which while not necessarily
optimal, should be reasonable.

An approximation to $\mu$ can be obtained
by Monte Carlo or quasi-Monte Carlo simulation
prior to estimation of $\ult_u^2$. In our simulations
we suppose that an oracle has supplied $\mu$
and then we compare estimators that do and
do not benefit from the oracle.

Another estimator of $\ult^2_u$ 
was considered independently by~\cite{salt:2002}
and the Masters thesis of~\cite{maun:2002} under the supervision
of S.\ S.\ Kucherenko and C.\ Pantelides.
This estimator, called correlated sampling by~\cite{sobo:mysh:2007}
replaces $f(\bsx_{i,u}\glu\bsy_{i,u})$
by $f(\bsx_{i,u}\glu\bsy_{i,u})-f(\bsy)$
in~\eqref{eq:natu} and then it is no longer
necessary to subtract $\hat\mu^2$.
Indeed the method can be viewed as subtracting the estimate
$n^{-2}\sum_{i=1}^n\sum_{i'=1}^nf(\bsx_i)f(\bsy_{i'})$
from the sample mean of $f(\bsx)f(\bsx_u\glu\bsy_{-u})$.
That estimator is called ``Correlation 1'' below.

\cite{sobo:mysh:2007} find that even the
correlated sampling method has increased
variance when $\mu$ is large. They propose
another estimator  replacing the first $f(\bsx_i)$
by $f(\bsx_i)-c$ for a constant $c$ near $\mu$.
Supposing that an oracle has supplied $c=\mu$
we call the resulting method ``Oracle 1''
because it makes use of the true $\mu$ one time.
One could also make use of the oracle's $\mu$
in both the left and right members of
the cross moment pair. We call this estimator
``Oracle 2'' below.
The fourth method to compare is a new estimator,
called ``Correlation 2'', that uses two
random offsets. Instead of replacing
$f(\bsx_i)$ by $f(\bsx_i)-\mu$ it
draws a third variable $\bsz\sim\dustd[0,1]^d$
and is based on  the identity
\begin{align}\label{eq:3fold}
&\iiint \bigl(f(\bsx)-f(\bsz_u\glu\bsx_{-u})\bigr)
\bigl(f(\bsx_u\glu\bsy_{-u})-f(\bsy)\bigr)
\rd\bsx\rd\bsy\rd\bsz\notag\\
& =
(\mu^2+\ult^2_u) -\mu^2-\mu^2+\mu^2 - \ult^2_u.
\end{align}

We compare the following estimators
\begin{align*}
&\frac1n\sum_{i=1}^nf(\bsx_i)(f(\bsx_{i,u}\glu\bsy_{i,-u})-f(\bsy))&& \text{(Correlation 1)}\\
&\frac1n\sum_{i=1}^n(f(\bsx_i)-f(\bsz_{i,u}\glu\bsx_{i,-u})(f(\bsx_{i,u}\glu\bsy_{i,-u})-f(\bsy))&& \text{(Correlation 2)}\\
&\frac1n\sum_{i=1}^n(f(\bsx_i)-\mu)(f(\bsx_{i,u}\glu\bsy_{i,-u})-f(\bsy))&& \text{(Oracle 1)}\\
&\frac1n\sum_{i=1}^n(f(\bsx_i)-\mu)(f(\bsx_{i,u}\glu\bsy_{i,-u})-\mu)&& \text{(Oracle 2)}
\end{align*}
where 
$(\bsx_i,\bsy_i,\bsz_i)\simiid\dustd[0,1]^{3d}$ for $i=1,\dots,n$.
Not all components of these vectors are
necessary to estimate $\ult^2_u$ for a single
$u$, but many applications seek $\ult^2_u$ for several
sets $u$ at once, so it is simpler to write them this way.
Also, the cost is assumed to be largely in evaluating
$f$, not in producing the inputs $(\bsx_i,\bsy_i,\bsz_i)$.
The properties of these estimators are
given in Table~\ref{tab:est}.

The intuitive reason why ``Correlation 2'' should
be effective on small indices is as follows.
If the variables in the set $u$ are really unimportant
then $f(\bsx)$ will be determined almost entirely
by the values in $\bsx_{-u}$.
Then both $f(\bsx_i)-f(\bsz_{i,u}\glu\bsx_{i,-u})$
and $f(\bsx_{i,u}\glu\bsy_{i,-u})-f(\bsy)$
should be small values, even smaller than by
centering at $\mu$, and so the estimator
takes a mean product of small quantities.

We do not compare the original estimator~\eqref{eq:natu}.
The bias correction makes it more complicated to describe
the accuracy of this method.  Also that estimator
had extremely bad performance in~\cite{sobo:mysh:2007}.

\begin{table}\centering
\begin{tabular}{lcc}
\toprule
Name & Expectation & Cost \\
\midrule
Original~\eqref{eq:natu} & $\mu^2+\ult^2_u$ & 2 \\
Correlation 1 & $\ult^2_u$ & 3 \\
Correlation 2 & $\ult^2_u$ & 4 \\
Oracle 1 & $\ult^2_u$ & 3 \\
Oracle 2 & $\ult^2_u$ & 2 \\
\bottomrule
\end{tabular}
\caption{\label{tab:est}
Estimators of $\ult^2_u$ with expected value and number
of function values required per sample.
}
\end{table}

\section{Examples}\label{sec:exam}

\subsection{$g$ function}

This is the example used by~\cite{sobo:mysh:2007}.
It has $d=3$ and 
$$
f(\bsx) = \prod_{j=1}^3\frac{|4x_j-2|+2+3a}{1+a_j}.
$$
This function has $\mu=27$ and
$\sigma^2_{\{1\}} = 0.0675$,
$\sigma^2_{\{2\}} = 0.27$,
$\sigma^2_{\{3\}} = 1.08$,
$\sigma^2_{\{1,2\}} = 0.000025$,
$\sigma^2_{\{1,3\}} = 0.0001$,
$\sigma^2_{\{2,3\}} = 0.0004$,
$\sigma^2_{\{2,3\}} = 0.0004$,
and $\sigma^2_{\{1,2,3\}} \doteq 3.7\times 10^{-8}$.
The smallest and therefore probably
the most difficult $\ult^2_u$ to estimate
is $\ult^2_{\{1\}}=\sigma^2_{\{1\}}$.
That is the one that they measure.

They report numerical values of $\wh\ult^2_{\{1\}}/\ult^2_{\{1\}}$
for the four estimates in Table~\ref{tab:est}
(exclusive of the new ``Correlation 2'' estimate)
based on $n=256{,}000$ samples. The original estimator
gave a values $2.239$ times as large as
the true $\ult^2_{\{1\}}$. The others were 
ranged from $0.975$ to $1.104$ times the true value.
They did not use the oracle for $\mu$, but centered
their estimator instead on $c=26.8$ to investigate
a somewhat imperfect oracle.

The four estimators we consider here are all simply
sample averages. As a result we can measure their
efficiency by just estimating their variances.
The efficiencies of these methods, using 
``Correlation 1'' as the baseline are given by
$$
E_{\text{corr 2}} = 
\frac34
\frac{\var(\text{corr 1})}{\var(\text{corr 2})},\quad
E_{\text{orcl 1}} = 
\frac{\var(\text{corr 1})}{\var(\text{orcl 1})},\quad\text{and}\quad
E_{\text{orcl 2}} = 
\frac32
\frac{\var(\text{corr 1})}{\var(\text{orcl 2})}
$$
where the multiplicative factors accounts for the 
unequal numbers
of function calls required by the methods.

\begin{table}
\centering
\begin{tabular}{cccrrr}
\toprule
Set $u$ &  $\ult_u^2/\sigma^2$ & Corr 1 & Corr 2 & Orcl 1  & Orcl 2 \\
\midrule
$\{1\}$& $0.048$ &1 & 4256 & 518 &  74 \\
$\{2\}$& $0.190$ &1 & 1065 & 525 &  297 \\
$\{3\}$& $0.762$ &1 & 267 & 556 & 1329 \\
$\{1,2\}$& $0.238$ &1 & 774 & 503 &  364 \\
$\{1,3\}$& $0.809$ &1 & 243 & 529 & 1306 \\
$\{2,3\}$& $0.952$ &1 & 194 & 473 & 1261 \\
\bottomrule
\end{tabular}
\caption{\label{tab:effg}
Relative efficiencies of $4$ estimators of $\ult^2_u$
for the $g$-function, rounded to the nearest integer.
Relative indices $\ult^2_u/\sigma^2$ rounded to three places.
}
\end{table}

The efficiencies of the four estimators
are compared in Table~\ref{tab:effg} based
on $n=1{,}000{,}000$ function evaluations. This is far
more than one would ordinarily
use to estimate the indices themselves, but
we are interested in their sampling variances
here.
We consider all sets except $u=\{1,2,3\}$
because $\ult^2_{\{1,2,3\}}=\sigma^2$ which
can be estimated more directly.
The table contains one small index
$\ult^2_{\{1\}}$, (the
one \cite{sobo:mysh:2007} studied).
On the small effect, the new Correlation 2 estimator
is by far the most efficient, outperforming both
oracles. Inspecting the table, it is clear
that it pays to use subtraction in both
left and right sides of the estimator
and that the smaller the effect $\ult^2_u$ is, the
better it is to replace the oracle's $\mu$ with
a correlation based estimate.

\subsection{Other product functions}

It is convenient to work with functions of the form
\begin{align}
\label{eq:prod}
f(\bsx) = \prod_{j=1}^d (\mu_j + \tau_j g_j(x_j))
\end{align}
where $\int_0^1g(x)\rd x=0$,
 $\int_0^1g(x)^2\rd x=1$, and
 $\int_0^1g(x)^4\rd x<\infty$.
For this function $\sigma^2_u = \prod_{j\in u}\tau_j^2\prod_{j\not\in u}\mu^2_j$.
Taking $g(x) = \sqrt{12}(x-1/2)$,
$d=6$,
$\mu=(1,1,1,1,1,1)$
and $\tau=(4,4,2,2,1,1)/4$
and sampling $n=1{,}000{,}000$ times lead
to the results in Table~\ref{tab:effg1}.
The results are not as dramatic
as for the $g$-function, but they show
the same trends.  The smaller $\ult^2_u$ is,
the more improvement comes from the
new estimator. On the smallest indices
it beats both oracles.

\begin{table}[t]
\centering
\begin{tabular}{cccrrr}
\toprule
Set $u$ &  $\ult_u^2/\sigma^2$ & Corr 1 & Corr 2 & Orcl 1  & Orcl 2 \\
\midrule
$\{1\}$& $0.165$ &1 & $0.74$ & $1.13$ & $1.23$ \\
$\{2\}$& $0.165$ &1 & $0.73$ & $1.14$ & $1.24$ \\
$\{3\}$& $0.041$ &1 & $1.69$ & $1.15$ & $0.54$ \\
$\{4\}$& $0.041$ &1 & $1.67$ & $1.15$ & $0.54$ \\
$\{5\}$& $0.010$ &1 & $5.45$ & $1.16$ & $0.20$ \\
$\{6\}$& $0.010$ &1 & $5.58$ & $1.16$ & $0.20$ \\
\midrule        
$\{1,2\}$& $0.826$ & 1 & $0.75$ & $1.21$ & $1.86$\\
$\{3,4\}$& $0.176$ & 1 & $1.23$ & $1.16$ & $0.94$\\
$\{5,6\}$& $0.042$ & 1 & $2.94$ & $1.16$ & $0.38$\\
\bottomrule
\end{tabular}
\caption{\label{tab:effg1}
Relative efficiencies of $4$ estimators of $\ult^2_u$
for the product function~\eqref{eq:prod}.
Relative indices $\ult^2_u/\sigma^2$ rounded to three places.
}
\end{table}

The improvements for the $g$-function are
much larger than for the product studied
here. For the purposes of Monte Carlo sampling
the absolute value cusp in the $g$-function
makes no difference. The $g$-function has
the same moments as the product function
with $\mu_j = 3$ and $\tau_j = 1/(\sqrt{3}a_j)$.
Computing the $g$ function estimates with
the product function code (as a check) yields the same
magnitude of improvement seen in Table~\ref{tab:effg}.


\section{Some generalizations and  a recommendation}\label{sec:gene}

The best unbiased estimator of $\ult^2_u$ is the one that minimizes
the variance after making an adjustment for
the number of function calls.
Unfortunately variances of estimated variances involve fourth
moments which are harder to ascertain than the second
moments underlying the ANOVA decomposition.

\subsection{More general centering}
The estimators in Section~\ref{sec:exam} are all formed by
taking pairs $f(\bsx)$ and $f(\bsx_u\glu\bsy_{-u})$,
subtracting centers from them, and averaging the product
of those two centered values.
Where they differ is in how they are centered.

We can generalize this approach to a spectrum of 
centering methods.

\begin{theorem}\label{thm:four}
Let $v$ and $v'$ be two subsets of $u^c$
and let $\bsx,\bsy,\bsw,\bsz$ be independent
$\dustd[0,1]^d$ random vectors.
Then
\begin{align}\label{eq:four}
\e\Bigl(
\bigl(
f(\bsx)-f(\bsx_v\glu\bsz_{-v})
\bigr)
\bigl(
f(\bsx_u\glu\bsy_{-u})-f(\bsy_{v'}\glu\bsw_{-v'})
\bigr)
\Bigr)
&=\ult^2_u.
\end{align}
\end{theorem}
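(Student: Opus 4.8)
The plan is to expand the product inside the expectation into four cross terms and evaluate each one separately using the integral identities \eqref{eq:basiclo} and \eqref{eq:basichi} (or rather the underlying expectations they encode), exploiting the fact that $v,v'\subseteq u^c$ so that the ``$u$-coordinates'' of the two factors are coupled in exactly the way the Sobol' index requires. Writing $A=f(\bsx)-f(\bsx_v\glu\bsz_{-v})$ and $B=f(\bsx_u\glu\bsy_{-u})-f(\bsy_{v'}\glu\bsw_{-v'})$, the expectation of $AB$ is $\e[f(\bsx)f(\bsx_u\glu\bsy_{-u})]-\e[f(\bsx)f(\bsy_{v'}\glu\bsw_{-v'})]-\e[f(\bsx_v\glu\bsz_{-v})f(\bsx_u\glu\bsy_{-u})]+\e[f(\bsx_v\glu\bsz_{-v})f(\bsy_{v'}\glu\bsw_{-v'})]$.

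The first term is $\mu^2+\ult^2_u$ by \eqref{eq:basiclo}. For the fourth term, the two arguments $\bsx_v\glu\bsz_{-v}$ and $\bsy_{v'}\glu\bsw_{-v'}$ share no coordinates drawn from a common random vector (here I would check: the first uses $\bsx$ on $v$ and $\bsz$ on $-v$; the second uses $\bsy$ on $v'$ and $\bsw$ on $-v'$; since $\bsx,\bsy,\bsz,\bsw$ are independent, the two function evaluations are independent), so that term is $\mu^2$. The crux is the two middle terms, each of which I expect to equal $\mu^2$: in $\e[f(\bsx)f(\bsy_{v'}\glu\bsw_{-v'})]$ the argument $\bsy_{v'}\glu\bsw_{-v'}$ uses none of $\bsx$'s coordinates (it is built from $\bsy$ and $\bsw$), hence independence gives $\mu^2$; in $\e[f(\bsx_v\glu\bsz_{-v})f(\bsx_u\glu\bsy_{-u})]$ the two arguments share the coordinates indexed by $v$ (both take those from $\bsx$), but since $v\subseteq u^c$, the shared block lies entirely in the complement of $u$ — and I would argue that for a product-type argument coupling, sharing a subset of $u^c$ coordinates while the remaining coordinates are fully independent still yields $\mu^2$, because conditioning on $\bsx_v$ the two factors become independent with conditional means that integrate (over $\bsx_v$) to $\mu$ each; more precisely $\e[f(\bsx_v\glu\bsz_{-v})f(\bsx_u\glu\bsy_{-u})]=\e_{\bsx_v}\!\big[\e[f(\bsx_v\glu\bsz_{-v})\mid \bsx_v]\cdot\e[f(\bsx_u\glu\bsy_{-u})\mid \bsx_v]\big]$, and the second conditional expectation is $\mu$ identically since $u\cap v=\emptyset$ means $\bsx_u\glu\bsy_{-u}$ does not involve $\bsx_v$, collapsing the whole thing to $\mu\cdot\e[f]=\mu^2$. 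Summing: $(\mu^2+\ult^2_u)-\mu^2-\mu^2+\mu^2=\ult^2_u$, as claimed, matching the schematic display \eqref{eq:3fold} in the special case $v=v'=u^c$.

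The main obstacle is purely bookkeeping: correctly identifying, for each of the four cross terms, which coordinate blocks of the two function arguments are drawn from a common random vector versus from independent ones, and then invoking the right identity — full independence giving $\mu^2$, the ``$u$-coupling'' of \eqref{eq:basiclo} giving $\mu^2+\ult^2_u$, and the ``$u^c$-coupling'' giving $\mu^2$ via the conditioning argument above. The condition $v,v'\subseteq u^c$ is used precisely to guarantee that the only term exhibiting $u$-coupling is the first one, so no $\ult^2_u$ or $\olt^2_u$ contribution survives anywhere else. I would present this as a short case analysis over the four terms, with the conditioning identity for the $u^c$-coupled term stated as the one non-routine step.
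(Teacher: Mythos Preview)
Your proof is correct and follows the same four-term expansion as the paper. One small wobble worth cleaning up: in the third cross term $\e[f(\bsx_v\glu\bsz_{-v})\,f(\bsx_u\glu\bsy_{-u})]$ you say the two arguments ``share the coordinates indexed by $v$'', but they do not---the first draws $\bsx$ on $v$ while the second draws $\bsx$ on $u$, and $u\cap v=\emptyset$, so the two function evaluations are in fact outright independent and the conditioning detour is unnecessary. The paper handles all four terms uniformly by recognizing each as $\mu^2+\ult^2_s$ where $s$ is the set of shared coordinate indices, yielding $(\mu^2+\ult^2_u)-(\mu^2+\ult^2_\emptyset)-(\mu^2+\ult^2_{u\cap v})+(\mu^2+\ult^2_\emptyset)=\ult^2_u$.
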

\begin{proof}
\begin{align*}
&\e\Bigl(
\bigl(
f(\bsx)-f(\bsx_v\glu\bsz_{-v})
\bigr)
\bigl(
f(\bsx_u\glu\bsy_{-u})-f(\bsy_{v'}\glu\bsw_{-v'})
\bigr)
\Bigr)\notag\\
&=
(\mu^2 + \ult^2_u)
-(\mu^2 + \ult^2_\emptyset)
-(\mu^2 + \ult^2_{u\cap v})
+(\mu^2 + \ult^2_\emptyset)\notag\\
&=\ult^2_u,
\end{align*}
because $u\cap v=\emptyset$ and $\ult^2_\emptyset=0$.
\end{proof}

As a result of Theorem~\ref{thm:four},
we may estimate $\ult_u^2$ by
\begin{align}\label{eq:genult}
\frac1n\sum_{i=1}^n
\bigl(
f(\bsx_i)-f(\bsx_{i,v}\glu\bsz_{i,-v})
\bigr)
\bigl(
f(\bsx_{i,u}\glu\bsy_{i,-u})-f(\bsy_{i,v'}\glu\bsw_{i,-v'})
\bigr)
\end{align}
where 
$(\bsx_i,\bsy_i,\bsw_i,\bsz_i)\simiid\dustd(0,1)^{4d}$.

The new estimate~\eqref{eq:genult} uses up four
independent vectors, not the three used in
the Correlation 2 estimator, so we should check
that it really is a generalization.

First, suppose that $v'=u^c$. Then the only part of
the vector $\bsw$ that is used in~\eqref{eq:genult}
is $\bsw_{-v'}=\bsw_u$.
Because~\eqref{eq:genult} does not use $\bsy_u$
the needed parts of $\bsy$ and $\bsw$ fit
within the same vector. That is
we can sample $\bsy$ as before and use
$\bsy_u$ for $\bsw_u$. As a result when
$v'=u^c$ we only need three vectors as follows:
\begin{align}\label{eq:genult2}
\frac1n\sum_{i=1}^n
\bigl(
f(\bsx_i)-f(\bsx_{i,v}\glu\bsz_{i,-v})
\bigr)
\bigl(
f(\bsx_{i,u}\glu\bsy_{i,-u})-f(\bsy)
\bigr).
\end{align}
If we take $v=u^c$ too, then~\eqref{eq:genult2}
reduces to the Correlation 2 estimator.

At first, it might appear that the Oracle 2
estimator arises by taking $v=v'=\emptyset$,
but this is not what happens, even when $\mu=0$.
A more appropriate generalization of the oracle
estimators is to based on the identity
$$
\ult^2_v = \e\bigl(
\bigl(f(\bsx)-\mu_v(\bsx_v)\bigr)
\bigl(f(\bsx_u\glu\bsz_{-u})-\mu_{v'}(\bsz_{v'})\bigr)
\bigr)$$
where $\mu_v(\bsx_v) = \e(f(\bsx)\mid \bsx_v)$
and $v,v'\subseteq u^c$.
To turn this identity into a practical estimator
requires estimation of these conditional expectations.
For $v=v'=\emptyset$ the conditional expectations
become the unconditional expectation, which is simply
the integral of $f$. For other $v$ and $v'$, such
estimation requires something like nonparametric
regression, with bias and variance expressions
that complicate the analysis of the resulting
estimate.

\subsection{Recommendation}

The Correlation 2 estimator has $v=v'=u^c$,
so it holds constant all of the variables
in $\bsx_{-u}$.
From Theorem~\ref{thm:four}, we see that this
is just one choice among many and it
raises the question of which variables
should be held fixed in a Monte Carlo estimate
of $\ult^2_u$. 
The result is that we find taking $v=v'=u^c$ to be a
principled choice.

We can get some insight by considering functions
of product form.  Even there the resulting variance
formulas become cumbersome, but simplified versions
yield some insight. 
We can write it as
\begin{align}\label{eq:prodwithh}
f(\bsx) = \prod_{j=1}^d h_j(x_j)
\end{align}
where $h_j(x) = \mu_j +\tau_jg_j(x)$
with
$\int_0^1g_j(x)^p\rd x$ taking values
$0$, $1$, $\gamma_j$ and $\kappa_j$
for $p=1$, $2$, $3$, and $4$ respectively.
In statistical terms, the random variable
$h_j(x)$ has skewness $\gamma_j/\tau_j^3$
and kurtosis $\kappa_j/\tau_j^4-3$
if $x\sim\dustd[0,1]$ and $\tau_j>0$.
We will suppose that all $\tau_j\ge 0$
and that all $\kappa_j<\infty$.

\begin{proposition}\label{prop:4moment}
Let $\wh\ult^2_u$ be given by~\eqref{eq:genult},
where $f$ is given by the product model~\eqref{eq:prodwithh}.
Then, for $v,v'\subseteq u^c$,
$$
n\var(\wh\ult^2_u)
=\e\bigl( Q_{v}(\bsx,\bsy,\bsz,\bsw)
 Q_{uv'}(\bsx,\bsy,\bsz,\bsw)
\bigr) - \ult^4_u
$$
for $\bsx,\bsy,\bsz,\bsw\simiid\dustd[0,1]^d$
where
\begin{align*}
Q_v 
&= \prod_{j=1}^dh^2_j(x_j)
+\prod_{j\in v}h^2_j(x_j)\prod_{j\not\in v}h_j^2(z_j)
-2\prod_{j\in v}h_j^2(x_j)\prod_{j\not\in v}h_j(x_j)h_j(z_j),\quad\text{and}\\
Q_{uv'} 
&= \prod_{j\in u}h^2_j(x_j)\prod_{j\not\in u}h^2_j(y_j)
+\prod_{j\in v'}h^2_j(y_j)\prod_{j\not\in u}h^2_j(w_j)\\
&\phantom{=}-2
\prod_{j\in u^c\cap v'}
h^2_j(y_j)
\prod_{j\in u\cap v'^c}
h_j(x_j)h_j(w_j)
\prod_{j\in u^c\cap v'^c}
h_j(y_j)h_j(w_j)
\end{align*}
\end{proposition}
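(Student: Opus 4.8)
The plan is to reduce the variance of the average to the second moment of a single summand and then exploit the multiplicative structure of the product model. Write the generic summand of~\eqref{eq:genult} as $T=A\cdot B$ with $A=f(\bsx)-f(\bsx_v\glu\bsz_{-v})$ and $B=f(\bsx_u\glu\bsy_{-u})-f(\bsy_{v'}\glu\bsw_{-v'})$, for one i.i.d.\ copy $(\bsx,\bsy,\bsz,\bsw)$. Since $\wh\ult^2_u=\frac1n\sum_i T_i$ with the $T_i$ i.i.d., $n\var(\wh\ult^2_u)=\var(T)=\e(T^2)-(\e T)^2$, and Theorem~\ref{thm:four} gives $\e T=\ult^2_u$. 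So the whole task reduces to showing $\e(T^2)=\e\bigl(Q_v\,Q_{uv'}\bigr)$, which I would establish via the pointwise identity $T^2=A^2B^2=Q_v(\bsx,\bsy,\bsz,\bsw)\,Q_{uv'}(\bsx,\bsy,\bsz,\bsw)$; once that holds, equality of the expectations is immediate.

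For $A^2$, expand $A^2=f(\bsx)^2-2f(\bsx)f(\bsx_v\glu\bsz_{-v})+f(\bsx_v\glu\bsz_{-v})^2$ and substitute $f(\bsx)=\prod_j h_j(x_j)$ together with $f(\bsx_v\glu\bsz_{-v})=\prod_{j\in v}h_j(x_j)\prod_{j\notin v}h_j(z_j)$. The three pieces become exactly the three summands defining $Q_v$: in the cross term the coordinates in $v$ are common to both factors (both supplied by $\bsx$), producing $h_j^2(x_j)$ there, while for $j\notin v$ one gets $h_j(x_j)h_j(z_j)$. Hence $A^2=Q_v$ identically.

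For $B^2$, expand similarly into three pieces. The two squares $f(\bsx_u\glu\bsy_{-u})^2$ and $f(\bsy_{v'}\glu\bsw_{-v'})^2$ are immediate products and match the first two summands of $Q_{uv'}$. The cross term is where the bookkeeping must be done carefully: partition $\cd$ by membership in $u$ and in $v'$, use the hypothesis $v'\subseteq u^c$ (so $u\cap v'=\emptyset$) to discard that block, and then, on each of the three surviving blocks $u\cap v'^c$, $u^c\cap v'$, $u^c\cap v'^c$, track which of the four vectors supplies coordinate $j$ in each factor of $f(\bsx_u\glu\bsy_{-u})f(\bsy_{v'}\glu\bsw_{-v'})$. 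On $u\cap v'^c$ this gives $h_j(x_j)h_j(w_j)$; on $u^c\cap v'$ both factors draw from $\bsy$, giving $h_j^2(y_j)$; on $u^c\cap v'^c$ one gets $h_j(y_j)h_j(w_j)$ --- which reproduces exactly the cross term of $Q_{uv'}$. Thus $B^2=Q_{uv'}$ identically, and combining $T^2=Q_vQ_{uv'}$ with $\e T=\ult^2_u$ yields $n\var(\wh\ult^2_u)=\e(Q_vQ_{uv'})-\ult^4_u$.

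The only genuine obstacle is the combinatorial bookkeeping in the $B^2$ cross term --- keeping straight which of $\bsx,\bsy,\bsz,\bsw$ supplies each coordinate in each factor --- together with the observation that $u\cap v=u\cap v'=\emptyset$ is precisely what makes the surviving index blocks line up with the stated formula; everything else is a mechanical expansion of squares. (Alternatively one could bypass the pointwise identity and take expectations of the nine terms of $A^2B^2$ one at a time, using independence of $\bsx,\bsy,\bsz,\bsw$ to factor each over coordinates, but identifying the two squares with $Q_v$ and $Q_{uv'}$ is the cleaner route.)
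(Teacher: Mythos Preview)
Your proof is correct and follows essentially the same approach as the paper: write the summand as $A\cdot B$, square it, and identify $A^2=Q_v$ and $B^2=Q_{uv'}$ pointwise using the product structure together with $u\cap v'=\emptyset$. The paper's proof is terser---it simply asserts that squaring the two differences yields $Q_v$ and $Q_{uv'}$---whereas you spell out the block-by-block bookkeeping for the cross term of $B^2$, but the argument is the same.
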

\begin{proof}
We need the expected square of the quantity
inside the expectation in equation~\eqref{eq:four}.
First we expand
\begin{align*}
f(\bsx)-f(\bsx_{v}\glu\bsz_{-v})
=\prod_{j=1}^dh_j(x_j)-\prod_{j\in v}h_j(x_j)\prod_{j\not\in v}h_j(z_j).
\end{align*}
Squaring this term yields $Q_v$, and similarly, squaring
\begin{align*}
f(\bsx_u\glu\bsy_{-u})-f(\bsy_{v'}\glu\bsw_{-v'})
=\prod_{j\in u}h_j(x_j)\prod_{j\not\in u}h_j(y_j)
-\prod_{j\in v'}h_j(y_j)\prod_{j\not\in v'}h_j(w_j)
\end{align*}
yields $Q_{uv'}$, after using $u\cap v'=\emptyset$.
\end{proof}

Using Proposition~\ref{prop:4moment} we can see
what makes for a good estimator in the product
function setting. The quantities $Q_v$ and
$Q_{uv'}$ should both have small variance
and their correlation should be small.
The latter effect is very complicated
depending on the interplay among
$u$, $v$ and $v'$,
and one might expect it to be of lesser
importance. So we look at $\e( Q_v^2)$
for insight as to which indices should
be in $v$.  Then we suppose that it
will usually be best to take the same
indices for both $v$ and $v'$.

\begin{theorem}\label{thm:EQsq}
Let $\wh\ult^2_u$ be given by~\eqref{eq:genult},
where $f$ is given by the product model~\eqref{eq:prodwithh}
and let $Q_v$ be as defined in
Proposition~\eqref{prop:4moment}.
Then $Q_v$ is minimized over $v\subseteq u^c$ by
taking $v=u^c$.
\end{theorem}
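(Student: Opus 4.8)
The plan is to use the product form of $f$ to rewrite $\e(Q_v^2)$ as an explicit product over coordinates, and then to check that it never increases when one more index is transferred into $v$; since the largest admissible set is $u^c$, minimality at $v=u^c$ follows. First I would factor out the ``frozen'' coordinates: with $c_v(\bsx)=\prod_{j\in v}h_j(x_j)$, the evaluations $f(\bsx)$ and $f(\bsx_v\glu\bsz_{-v})$ agree in exactly the factors indexed by $v$, so
\[
f(\bsx)-f(\bsx_v\glu\bsz_{-v})=c_v(\bsx)\Bigl(\prod_{j\notin v}h_j(x_j)-\prod_{j\notin v}h_j(z_j)\Bigr),
\]
with $c_v(\bsx)$ independent of the bracketed term. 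Writing $m_{p,j}=\int_0^1 h_j(x)^p\rd x$ (so $m_{1,j}=\mu_j$ and $m_{2,j}-m_{1,j}^2=\tau_j^2$), squaring and taking expectations gives $\e(Q_v^2)=\bigl(\prod_{j\in v}m_{4,j}\bigr)R(v^c)$, where $R(S):=\e\bigl[\bigl(\prod_{j\in S}h_j(x_j)-\prod_{j\in S}h_j(z_j)\bigr)^4\bigr]$; expanding the fourth power and using independence across coordinates yields
\[
R(S)=2\prod_{j\in S}m_{4,j}-8\Bigl(\prod_{j\in S}m_{3,j}\Bigr)\Bigl(\prod_{j\in S}m_{1,j}\Bigr)+6\Bigl(\prod_{j\in S}m_{2,j}\Bigr)^2 .
\]

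Next I would reduce to one coordinate. Since $v\subseteq u^c$ forces $v^c\supseteq u$, with $v=u^c$ the case $v^c=u$, and since $\prod_{j\in u^c}m_{4,j}\ge0$, it suffices to prove that
\[
R\bigl(S\cup\{\ell\}\bigr)\ge m_{4,\ell}\,R(S)\qquad\text{for all }S\subseteq\cd,\ \ell\notin S ;
\]
iterating this along a chain $S=u,\,u\cup\{\ell_1\},\dots,v^c$ turns $R(v^c)$ into $R(u)$ times the leftover factors $\prod_{j\in u^c\setminus v}m_{4,j}$, which is exactly $\e(Q_v^2)\ge\e(Q_{u^c}^2)$. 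Substituting the closed form and cancelling the common $j\in S$ products, this one-coordinate claim is equivalent to
\[
8\bigl(m_{4,\ell}-m_{3,\ell}m_{1,\ell}\bigr)\Bigl(\prod_{j\in S}m_{3,j}\Bigr)\Bigl(\prod_{j\in S}m_{1,j}\Bigr)\ge 6\bigl(m_{4,\ell}-m_{2,\ell}^2\bigr)\Bigl(\prod_{j\in S}m_{2,j}\Bigr)^2 .
\]

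The hard part will be precisely this inequality. Cauchy--Schwarz gives $m_{2,\ell}^2\le m_{4,\ell}$ and $|m_{3,\ell}m_{1,\ell}|\le m_{4,\ell}$, so the right side is nonnegative and the prefactor $m_{4,\ell}-m_{3,\ell}m_{1,\ell}$ on the left is nonnegative; the catch is that $\bigl(\prod_{j\in S}m_{3,j}\bigr)\bigl(\prod_{j\in S}m_{1,j}\bigr)$ can have either sign, so showing the left side dominates needs a careful sign analysis exploiting the relations among $m_{1,\ell},\dots,m_{4,\ell}$. I would also record the second-moment version, which is the clean one and the version I would write out in full: the same conditioning gives $\e(Q_v)=\bigl(\prod_{j\in v}m_{2,j}\bigr)R_2(v^c)$ with $R_2(S)=2\prod_{j\in S}m_{2,j}-2\bigl(\prod_{j\in S}m_{1,j}\bigr)^2$; note $\e(Q_v)$ is just the variance of the left factor $f(\bsx)-f(\bsx_v\glu\bsz_{-v})$ of the estimator. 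Here the one-coordinate step is the identity $R_2\bigl(S\cup\{\ell\}\bigr)-m_{2,\ell}R_2(S)=2\tau_\ell^2\bigl(\prod_{j\in S}m_{1,j}\bigr)^2\ge0$, so $\e(Q_v)$ is non-increasing in $v$ and therefore minimized at $v=u^c$.
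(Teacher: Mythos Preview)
Your factorization and your closed form for $\e(Q_v^2)$ are correct; they are more accurate than the paper's own computation. In the paper's expansion of $\e(Q_v^2)$, the two cross terms $-4\e\bigl(A^3B\bigr)$ and $-4\e\bigl(AB^3\bigr)$ (with $A=\prod_jh_j(x_j)$ and $B=\prod_{j\in v}h_j(x_j)\prod_{j\notin v}h_j(z_j)$) are written as $-4\prod_{j\in v}\mu_{4j}\prod_{j\notin v}\mu_{2j}^2$, whereas the correct value is $-4\prod_{j\in v}\mu_{4j}\prod_{j\notin v}\mu_{3j}\mu_{1j}$, exactly the third‐moment product that shows up in your $R(S)$. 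With the paper's erroneous expression, the result collapses to $2\prod_j\mu_{4j}-2\prod_{j\in v}\mu_{4j}\prod_{j\notin v}\mu_{2j}^2$, and the minimization at $v=u^c$ follows trivially from $\mu_{4j}\ge\mu_{2j}^2$. With the correct expression it does not.

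Concretely, the ``hard inequality'' you flag is not merely hard; it fails. Take $d=2$, $u=\{1\}$, and $h_1=h_2=h$ where $h$ equals $-2$ on $[0,1/4]$ and $1$ on $(1/4,1]$. Then $m_1=\tfrac14$, $m_2=\tfrac74$, $m_3=-\tfrac54$, $m_4=\tfrac{19}{4}$, so $m_1m_3<0$. Your formula gives
\[
\e(Q_{\{2\}}^2)=m_4\,R(\{1\})=\tfrac{19}{4}\cdot\tfrac{486}{16}=\tfrac{9234}{64}\approx 144.3,
\qquad
\e(Q_\emptyset^2)=R(\{1,2\})=\tfrac{25758}{256}\approx 100.6,
\]
so $v=\emptyset$ beats $v=u^c=\{2\}$. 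Hence the $\e(Q_v^2)$ statement, as literally written in the proof, is false for general product functions, and no amount of sign analysis will rescue your one-coordinate inequality.

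What does survive is exactly the part you wrote out in full: the first-moment quantity $\e(Q_v)=2\prod_j m_{2,j}-2\prod_{j\in v}m_{2,j}\prod_{j\notin v}m_{1,j}^2$ is minimized at $v=u^c$, by your clean identity $R_2(S\cup\{\ell\})-m_{2,\ell}R_2(S)=2\tau_\ell^2\bigl(\prod_{j\in S}m_{1,j}\bigr)^2\ge0$. This is almost certainly the intended content of the theorem (the statement says ``$Q_v$ is minimized'', and $\e(Q_v)$ is the variance of the left factor of the estimator, which is the natural proxy discussed just before the theorem). So your proposal should drop the $\e(Q_v^2)$ branch entirely rather than label it ``hard'', and present only the $\e(Q_v)$ argument, which is correct and matches the structure of the paper's (miswritten) computation with all moment orders lowered by two.
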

\begin{proof}
Let $\mu_{4j} = \int_0^1 h_j(x)^4\rd x$
and $\mu_{2j} = \int_0^1 h_j(x)^2\rd x$.
It is elementary that $\mu_{4j}\ge\mu_{2j}^2$.
Expanding $\e(Q_u^2)$ and gathering terms yields,
\begin{align*}
&\prod_{j=1}^d\mu_{4j}
+\prod_{j=1}^d\mu_{4j}
+4\prod_{j\in v}\mu_{4j}\prod_{j\not\in v}\mu_{2j}^2
+2\prod_{j\in v}\mu_{4j}\prod_{j\not\in v}\mu_{2j}^2\\
&-4\prod_{j\in v}\mu_{4j}\prod_{j\not\in v}\mu_{2j}^2
-4\prod_{j\in v}\mu_{4j}\prod_{j\not\in v}\mu_{2j}^2\\
& =
2\prod_{j=1}^d\mu_{4j}-
2\prod_{j\in v}\mu_{4j}\prod_{j\not\in v}\mu_{2j}^2.
\end{align*}
We minimize this expression by taking the
largest possible set $v\subseteq u^c$,
that is $v=u^c$.
\end{proof}

From Theorem~\ref{thm:EQsq}, we see that the
Correlation 2 estimator minimizes $\e(Q^2_u)$
for product functions among estimators
of the form~\eqref{eq:genult}.

\section*{Acknowledgments}

This work was supported by the U.S.\ National
Science Foundation under grant DMS-0906056.
I thank Sergei Kucherenko for helpful discussions.

\bibliographystyle{apalike}
\bibliography{sensitivity}

\end{document}